\newtheorem{theorem}{Theorem}[section]
\begin{document}
\title{Anti-Disturbance Hierarchical Sliding Mode Controller for Deep-Sea Cranes with Adaptive Control and Neural Network Compensation}
\author{%
Qian Zuo\aref{hebut},
Shujie Wu\aref{hebut},
Yuzhe Qian\textsuperscript{*}\aref{hebut}%
\thanks{* Corresponding author: Yuzhe Qian (e-mail: qianyzh@hebut.edu.cn).}%
}

\affiliation[hebut]{College of Artificial Intelligence, Hebei University of Technology, Tianjin 300401, China \email{qianyzh@hebut.edu.cn}}

\maketitle

\begin{abstract}
To address non-linear disturbances and uncertainties in complex marine environments, this paper proposes a disturbance-resistant controller for deep-sea cranes. The controller integrates hierarchical sliding mode control, adaptive control, and neural network compensation techniques. By designing a global sliding mode surface, the dynamic coordination between the driving and non-driving subsystems is achieved, ensuring overall system stability. The subsystem surfaces reduce oscillations and enhance tracking accuracy. Adaptive control dynamically adjusts system parameters, enhancing robustness against external uncertainties, while the neural network compensates for time-varying disturbances through real-time learning. The stability of the control scheme is verified on the basis of Lyapunov theory. The simulation results demonstrate that, compared to traditional PID control, the proposed controller exhibits significant advantages in trajectory tracking accuracy, response speed, and disturbance rejection.
\end{abstract}

\keywords{Deep-sea crane, disturbance-resistant control,
hierarchical sliding mode control, adaptive control,
neural compensation, Lyapunov stability}

\section{Introduction}

Deep-sea cranes play a key role in subsea construction and offshore lifting tasks\cite{ref1, ref21, ref24}. Control is difficult due to underactuation and external disturbances \cite{ref3, ref27}. These systems are underactuated, meaning that some degrees of freedom, such as load swing angles, are not directly actuated. This indirect control structure complicates the task of achieving precise load positioning and stabilization in the presence of environmental disturbances.

Because crane systems are inherently under-actuated, scholars have devised control schemes that steer the payload indirectly by modulating the trolley’s motion. Trajectory-planning techniques, for example, have demonstrated good swing-suppression performance for double-pendulum payloads \cite{ref4}. Nevertheless, their sensitivity to modelling uncertainties and changing environmental conditions limits their viability for deep-sea lifting tasks \cite{ref8, ref22, ref23}.

Sliding mode control has been widely applied to improve robustness in underactuated systems, such as bridge cranes. Its ability to handle nonlinearities and disturbances makes it a strong candidate for deep-sea crane applications \cite{ref1, ref24}. However, classical SMC methods often suffer from chattering effects, which degrade control performance . To address this limitation, neural network-based adaptive controllers have been proposed, demonstrating enhanced disturbance rejection capabilities and adaptability in marine environments \cite{ref13, ref26, ref35}. For example, observer-based adaptive fuzzy control has been successfully used for offshore cranes to stabilize payloads under ship deck motions \cite{ref3, ref27}. Additionally, finite-time hierarchical sliding mode controllers have shown effectiveness in addressing unmatched disturbances in underactuated systems \cite{ref36}.

Recent advancements in adaptive control have further improved the robustness of deep-sea crane systems by dynamically compensating for environmental disturbances and parameter uncertainties \cite{ref15, ref19, ref31}. For example, adaptive neural network controllers with disturbance observers have achieved high tracking accuracy and stability under significant perturbations \cite{ref20, ref25, ref26}. These methods leverage real-time learning and adaptation mechanisms, making them particularly suitable for dynamic and uncertain marine environments.

In view of this, this paper proposes a hierarchical sliding mode control framework that combines adaptive control and neural network-based compensation to improve the anti-disturbance performance and control accuracy of deep-sea cranes. Unlike traditional methods, the framework integrates adaptive control, real-time disturbance estimation and multilayer sliding mode surface to achieve global stability and robustness under complex ocean conditions. The proposed controller addresses unmodeled dynamics and environmental disturbances while suppressing payload oscillations through indirect control.

The main contributions of this work are summarized as follows:
\begin{enumerate}
    \item \textbf{Unified Dynamic-Gain HSMC with Compact Neural Disturbance Observer:}
          A hierarchical sliding-mode controller is equipped with a real-time gain-switching law and a low-order neural observer that estimates time-varying hydrodynamic disturbances from measurable states only, achieving fast convergence, reduced chattering, and strong robustness—outperforming conventional fixed-gain HSMC and standalone NN-SMC schemes.

    \item \textbf{Global Stability Guarantee:}
          Composite Lyapunov functions together with LaSalle’s invariance principle rigorously prove global asymptotic convergence of all sliding surfaces and tracking errors under model uncertainties.

    \item \textbf{Verified Performance Superiority:}
          Simulations reduce mean-squared tracking error by 93.6 \% and 94.8 \% relative to PID and LQR controllers, respectively, and by 37.2 \% versus a benchmark fixed-gain HSMC, while cutting chattering energy by 46 \% compared with a representative NN-SMC, confirming faster transients and stronger disturbance rejection in dynamic marine conditions.
\end{enumerate}

The rest of the paper is organized as follows: Section~\ref{sec:dynamic_model} presents the dynamic modeling of the deep-sea crane system, including flexible load dynamics and environmental disturbances. Section~\ref{sec:HSMC_method} introduces the hierarchical sliding mode control strategy and neural network compensation framework. Section~\ref{sec:stability_analysis} provides a theoretical stability analysis using Lyapunov stability theory and LaSalle's invariance theorem. Section~\ref{sec:simulation_results} presents simulation results under various perturbation scenarios, demonstrating the effectiveness of the proposed controller. Finally, Section~\ref{sec:conclusion} concludes the paper and discusses potential future directions.

   \section{Dynamic Model}
\label{sec:dynamic_model}

The structure of the deep-sea crane is shown in Fig. 1. Based on the dynamic modeling method proposed in \cite{ref1}, the dynamic model is constructed. Meanwhile, unmodeled disturbances are introduced to fully consider the influence of hydrodynamic forces on the system. The dynamic equations include the translational motion of the trolley and the motion of the flexible load in the marine environment, which are expressed as follows:
\begin{align}
&(m_t + m_r)\ddot{x}(t) + m\!\int_{0}^{L}\ddot{w}(y,t)\,dy = u(t)\label{eq:trolley_dynamics}\\
&EIw^{''''}(y,t) + m\ddot{w}(y,t) + c\dot{w}(y,t) \notag\\
&= -m\ddot{x}(t) - f_w(y,t) \label{eq:flexible_dynamics}
\end{align}
Equation \eqref{eq:trolley_dynamics} describes the motion of the trolley, where \( x(t) \) represents the trolley position, and \( u(t) \) is the control input. Equation \eqref{eq:flexible_dynamics} reflects the dynamic model of the flexible load, where \( w(y, t) \) denotes the deformation of the flexible load, and \( f_w(y, t) \) represents the hydrodynamic force acting on the flexible load.

\subsection{Hydrodynamic Model}

The hydrodynamic force \( f_w(y, t) \) consists of inertial force, drag force, and unmodeled disturbances \( \Delta f(y, t) \), which is expressed as:
\begin{equation}
    f_w(y, t) = f_m(y, t) + f_d(y, t) + \Delta f(y, t)
    \label{eq:f_w}
\end{equation}
where the inertial force \( f_m(y, t) \) is influenced by acceleration and is given by:
\begin{equation}
    f_m(y, t) = \frac{\pi}{4} \rho_w C_a d^2 \left( \ddot{x}(t) + \ddot{w}(y, t) \right)
    \label{eq:f_m}
\end{equation}
The drag force \( f_d(y, t) \) is affected by velocity and is expressed as:
\begin{equation}
    f_d(y, t) = \frac{1}{2} \rho_w C_d d \left( \dot{x}(t) + \dot{w}(y, t) \right) \left| \dot{x}(t) + \dot{w}(y, t) \right|
    \label{eq:f_d}
\end{equation}
The unmodeled disturbance term \( \Delta f(y, t) \) is introduced to describe the influence of complex external disturbances on the system, which is estimated and compensated in real-time using a neural network observer.

\begin{figure}[h!]
\centering
\includegraphics[width=0.45\textwidth]{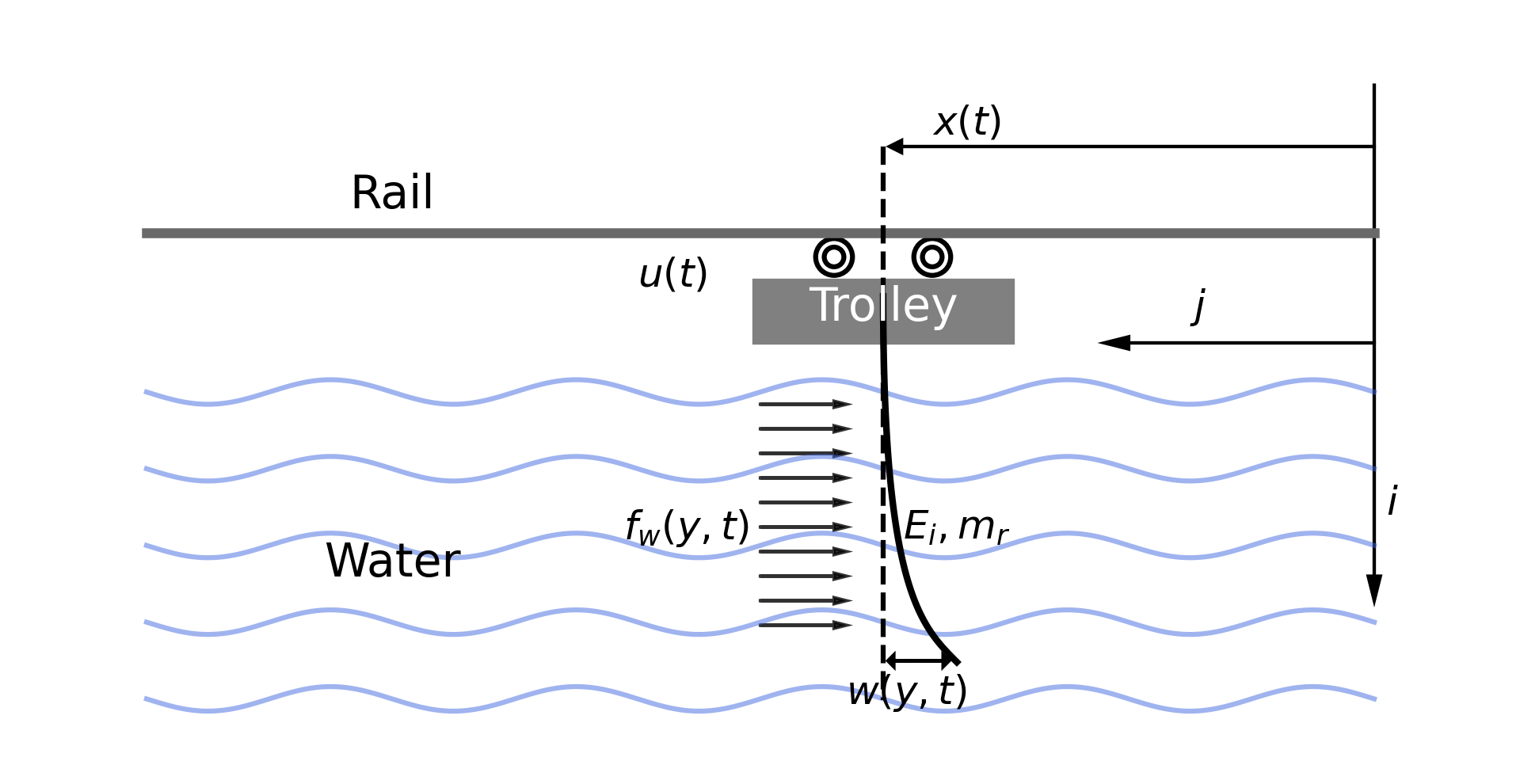}
\caption{Dynamic model of the deep-sea crane system}
\label{fig:model}
\end{figure}

\begin{table}[h!]
\centering
\caption{System Parameters}
\begin{tabular}{p{3cm}p{3cm}}
\hline
\textbf{Parameter} & \textbf{Description} \\
\hline
$m_r$ & Load mass (kg) \\
$m_t$ & Trolley mass (kg) \\
$d$ & Load diameter (m) \\
$L$ & Load length (m) \\
$Ei$ & Young's modulus (GPa) \\
$c$ & Viscous damping coefficient (N$\cdot$s/m) \\
$\rho_w$ & Water density (kg/m$^3$) \\
$C_a$ & Added mass coefficient \\
$C_d$ & Drag coefficient \\
\hline
\end{tabular}
\label{table:parameters}
\end{table}

The above dynamic model comprehensively considers the movement of the trolley and the dynamic characteristics of the flexible load. The unmodeled disturbance term reflects the complex external influences. By introducing a neural network observer, the model can compensate for unknown disturbances in real-time, thereby enhancing the system's robustness and adaptability.
\subsection{ Control Objectives}

The goal of this work is to design a control system that ensures accurate trajectory tracking and strong robustness for a deep-sea crane operating under nonlinear, time-varying disturbances. Specifically, the system is expected to drive the trolley position $x(t)$ to follow a desired trajectory $x_d(t)$, such that the tracking error $e(t) = x_d(t) - x(t)$ converges asymptotically to zero. Meanwhile, the controller must reject unknown hydrodynamic disturbances $\Delta f(y, t)$ and maintain global stability of all subsystems. To achieve this, the proposed approach ensures that the hierarchical sliding surfaces $s_0(t)$, $s_1(t)$, and $s_2(t)$ converge to zero, which indirectly guarantees convergence of the system states. Additionally, the control input $u(t)$ is constrained within acceptable physical bounds to prevent actuator saturation. These objectives guide the controller structure, stability analysis, and simulation validation that follow.

\section{Controller Design}
\label{sec:HSMC_method}

To achieve control accuracy and global stability in complex marine environments, this section designs a hierarchical sliding mode controller based on a neural network. Due to coupling and nonlinear characteristics, the controller design still needs to handle a large amount of dynamic uncertainty and external disturbances. Therefore, the system is divided into a driving subsystem and a non-driving subsystem, and its control strategy is shown in Fig. \ref{fig:overview}.

\begin{figure}[h!]
\centering
\includegraphics[width=0.45\textwidth]{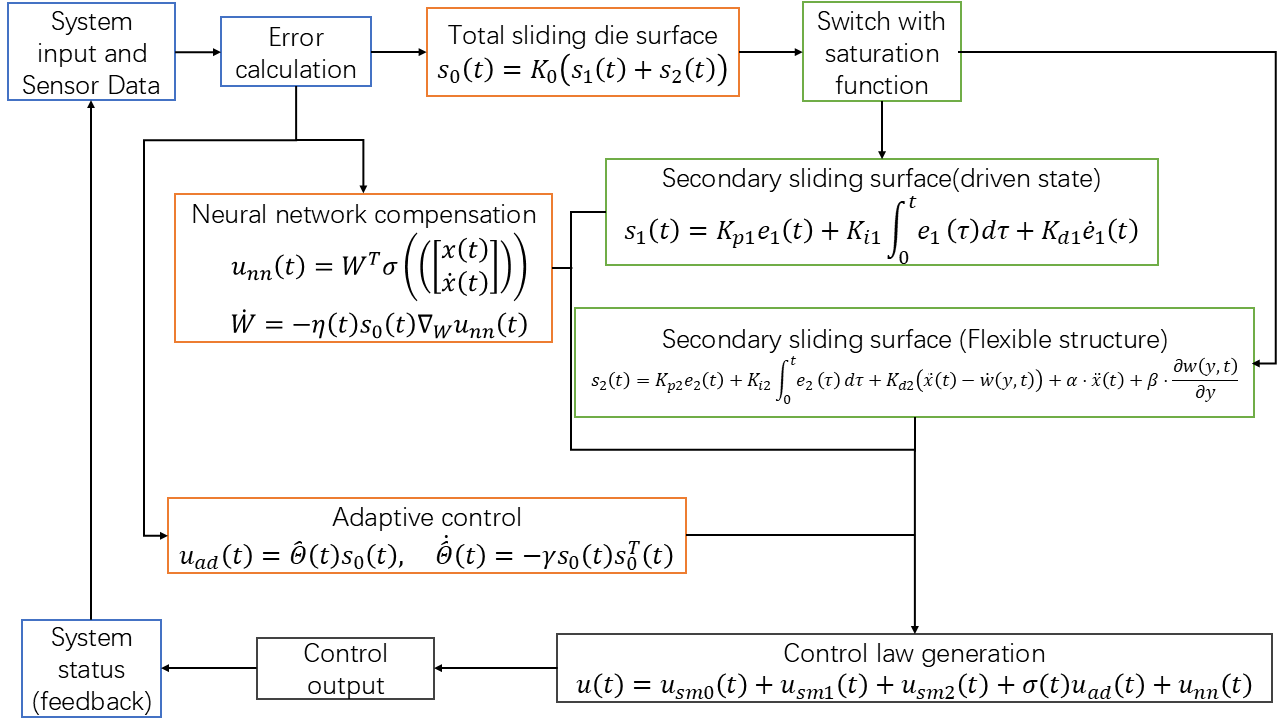}
\caption{Controller design framework}
\label{fig:overview}
\end{figure}

\subsection{Sliding Surface Design}

The core of the controller design is the construction of the sliding surface. The global sliding surface is defined as:
\begin{align}
    s_0(t) = K_0 \left(s_1(t) + s_2(t)\right)
\end{align}
where $K_0$ is the gain matrix that regulates the convergence rate of the global sliding surface, and $s_1(t)$ and $s_2(t)$ are the sliding surfaces of the driving subsystem and the non-driving subsystem, respectively.

For the driving subsystem, the sliding surface $s_1(t)$ is designed to reduce the trolley position error and is defined as:
\begin{align}
    s_1(t) = K_{p1} e_1(t) + K_{i1} \int_0^t e_1(\tau) d\tau + K_{d1} \dot{e}_1(t)
\end{align}
where $e_1(t) = x_d(t) - x(t)$ is the error between the desired position and the actual position, and $K_{p1}$, $K_{i1}$, and $K_{d1}$ are the proportional, integral, and derivative gains, respectively.

For the non-driving subsystem, the sliding surface \( s_2(t) \) is used to adjust the velocity error of the flexible load. Considering the nonlinear coupling effect between the trolley and the flexible load, a nonlinear compensation term is introduced when designing the sliding surface \( s_2(t) \). The update equation of the sliding surface \( s_2(t) \) is given by:
\begin{align}
s_2(t) &= K_{p2} e_2(t) + K_{i2} \int_0^t e_2(\tau)\,d\tau 
         + K_{d2}\left(\ddot{x}(t) - \dot{w}(y,t)\right) \nonumber\\
       &\quad + \alpha\,\ddot{x}(t) + \beta\,\frac{\partial w(y,t)}{\partial y}
       \label{eq:sliding_surface}
\end{align}

where $e_2(t) = -\dot{w}(y, t)$ is the velocity error of the flexible load, representing the difference between the actual and desired velocities of the load. $K_{p2}$, $K_{i2}$, and $K_{d2}$ are the proportional, integral, and derivative gains, respectively. $\alpha$ and $\beta$ are constant coefficients related to system characteristics, which regulate the influence of trolley acceleration \( \ddot{x}(t) \) and the gradient of the flexible load deformation \( \frac{\partial w(y,t)}{\partial y} \) on the compensation term. By introducing this nonlinear compensation term, the

\subsection{Control Law Design}

The control law integrates hierarchical sliding mode control, adaptive control, and neural network compensation, defined as:
\begin{align}
    u(t)=u_{sm0}(t)+u_{sm1}(t)+u_{sm2}(t)+\sigma(t)u_{ad}(t)+u_{\Psi}(t)
\end{align}

where $u_{sm0}(t)$, $u_{sm1}(t)$, and $u_{sm2}(t)$ are the sliding mode control terms, while $\sigma(t) u_{ad}(t)$ and $u_{\Psi}(t)$ represent the adaptive compensation term and the neural network compensation term, respectively.

\subsubsection{Sliding Mode Control Term}

Sliding mode control limits the control input using a saturation function, defined as:
\begin{align}
    &u_{sm0}(t) = -K_{s0} \operatorname{sat} \left( \frac{s_0(t)}{\phi_0} \right) \label{us0} \\
    &u_{sm1}(t) = -K_{s1} \operatorname{sat} \left( \frac{s_1(t)}{\phi_1} \right) - K_{r1} \dot{s}_1(t) \label{us1} \\
    &u_{sm2}(t) = -K_{s2} \operatorname{sat} \left( \frac{s_2(t)}{\phi_2} \right) - K_{r2} \dot{s}_2(t) \label{us2}
\end{align}
The function $\operatorname{sat}(\cdot)$ is a smooth saturation function, defined as:
\begin{align}
    \operatorname{sat}(\cdot) = \frac{\cdot}{|\cdot| + \phi}
\end{align}
where $\phi$ is the smoothing factor used to adjust the saturation range. When $|\cdot|$ is small, $\operatorname{sat}(\cdot)$ behaves approximately as a linear function; when $|\cdot|$ is large, its output gradually approaches a constant value of $-1$ or $1$.

\subsubsection{Adaptive Compensation Term}

Adaptive control is used to dynamically adjust the control gain, designed as:
\begin{align}
    u_{ad}(t) = \hat{\Theta}(t) s_0(t)
\end{align}
where $\hat{\Theta}(t)$ is the adaptive gain matrix.

\textbf{Update Law} is given by:
\begin{align}
    \dot{\hat{\Theta}}(t) = -\gamma s_0(t) s_0^T(t)
\end{align}
where $\gamma > 0$ is the learning rate.

\subsubsection{Neural Network Compensation Term}

A neural network is employed to estimate unmodeled dynamic behaviors, designed as:
\begin{align}
    &u_{\Psi}(t) = W^T \sigma \left( \Psi \left( \begin{bmatrix} x(t) \\ \dot{x}(t) \end{bmatrix} \right) \right) \\
    &\dot{W} = -\eta(t) s_0(t) \nabla_W u_{\Psi}(t)
\end{align}
where $W$ is the weight matrix of the neural network, and $\eta(t)$ is a dynamically adjusted learning rate, defined as:
\begin{align}
    \eta(t) = 
    \begin{cases} 
        \eta_{\text{max}}, & \|s_0(t)\| > \delta, \\
        \eta_{\text{min}}, & \|s_0(t)\| \leq \delta.
    \end{cases}
\end{align}
Although the disturbance term $\Delta f(y,t)$ is defined as a function of spatial and temporal variables, direct measurements of $y$-dependent terms are often impractical in marine systems. Instead, the disturbance effect is indirectly reflected through the system's measurable state variables $x(t)$ and $\dot{x}(t)$. Therefore, the neural network uses these states as inputs to approximate the disturbance in real time. This choice balances estimation accuracy with implementation feasibility.

\subsection{Switching Logic Design}

To dynamically adjust the control gain, a switching function $\sigma(t)$ is designed, defined as:
\begin{align}
    \sigma(t) = 
    \begin{cases}
    \alpha_1, & |s_0(t)| > \delta \\
    \alpha_2, & |s_0(t)| \leq \delta
    \end{cases}
\end{align}

where $\delta$ is the switching threshold used to distinguish between large and small error regions. When the sliding surface $|s_0(t)|$ is greater than $\delta$, $\sigma(t)$ is set to $2$ to increase the control gain and accelerate error convergence. When $|s_0(t)|$ is less than or equal to $\delta$, $\sigma(t)$ is set to $1$, returning the control gain to its default state to reduce chattering.

\section{Stability Analysis}
\label{sec:stability_analysis}

To further verify the global stability of the proposed controller and the convergence of the sliding surface error, we conduct a more in-depth analysis based on Lyapunov stability theory combined with the LaSalle invariant set method.

\begin{theorem}
Theorem: For the deep-sea crane system employing sliding mode control, neural network compensation, and adaptive gain update law, the system can ensure that the sliding surface error converges to zero, thereby achieving global asymptotic stability, i.e.:
\[
\lim_{t \to \infty} [s_0(t), s_1(t), s_2(t)]^T = [0, 0, 0]^T
\]
\end{theorem}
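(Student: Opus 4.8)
The plan is to establish the result through a composite Lyapunov analysis followed by an invariant-set argument. First I would introduce the ideal (but unknown) adaptive gain $\Theta^{*}$ and ideal neural weights $W^{*}$ that would exactly reconstruct the lumped parametric uncertainty and the hydrodynamic disturbance $\Delta f$, and define the estimation errors $\tilde{\Theta}(t)=\hat{\Theta}(t)-\Theta^{*}$ and $\tilde{W}(t)=W(t)-W^{*}$. The Lyapunov candidate would combine the energy of the global sliding surface with quadratic penalties on these two errors,
\begin{align}
V(t)=\tfrac{1}{2}s_0^{T}(t)s_0(t)+\tfrac{1}{2\gamma}\operatorname{tr}\!\big(\tilde{\Theta}^{T}(t)\tilde{\Theta}(t)\big)+\tfrac{1}{2}\tilde{W}^{T}(t)\tilde{W}(t),
\end{align}
so that the adaptive and learning dynamics are accounted for on the same footing as the sliding motion.

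Next I would differentiate $V$ along the closed-loop trajectories. Writing $\dot{s}_0=K_0(\dot{s}_1+\dot{s}_2)$ and substituting the trolley and flexible-load dynamics \eqref{eq:trolley_dynamics}--\eqref{eq:flexible_dynamics} together with the full control law, the term $s_0^{T}\dot{s}_0$ splits into a controllable part driven by $u_{sm0},u_{sm1},u_{sm2}$, a parametric cross-term multiplying $\tilde{\Theta}$, and a residual disturbance term. The cancellations are then arranged by design: the update $\dot{\hat{\Theta}}=-\gamma s_0 s_0^{T}$ is chosen precisely so that $\tfrac{1}{\gamma}\operatorname{tr}(\tilde{\Theta}^{T}\dot{\hat{\Theta}})$ annihilates the parametric cross-term, and the weight law $\dot{W}=-\eta s_0\nabla_W u_{\Psi}$ is chosen so that $\tilde{W}^{T}\dot{W}$ absorbs the disturbance-reconstruction cross-term. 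What survives is the saturation contribution $-s_0^{T}K_{s0}\operatorname{sat}(s_0/\phi_0)$ plus the sub-surface reaching terms, which are negative (semi-)definite, yielding $\dot{V}\le 0$ and hence boundedness of $s_0$, $\tilde{\Theta}$, and $\tilde{W}$.

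With $\dot{V}\le 0$ established, I would invoke LaSalle's invariance principle: the trajectory converges to the largest invariant set contained in $\{\dot{V}=0\}$, which forces $s_0\to 0$. Here lies the main obstacle. Because the hierarchy couples the subsystems through $s_0=K_0(s_1+s_2)$, the condition $s_0\to 0$ by itself yields only $s_1+s_2\to 0$, not the individual convergences $s_1\to 0$ and $s_2\to 0$ claimed in the theorem. Closing this gap is the \emph{crux} of the argument: on the invariant set $\{s_0=0\}$ one must analyze the reduced (zero-)dynamics of the driving and non-driving subsystems, using the independent reaching terms $-K_{r1}\dot{s}_1$ and $-K_{r2}\dot{s}_2$ in \eqref{us1}--\eqref{us2} to show that the only invariant motion compatible with $s_1+s_2\equiv 0$ is $s_1\equiv s_2\equiv 0$. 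Establishing that the coupled sub-surface dynamics admit no nontrivial persistent oscillation on this set, rather than merely a constrained drift, is where the real work lies.

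A further technical caveat I would flag is that the boundary-layer saturation $\operatorname{sat}(\cdot)=\cdot/(|\cdot|+\phi)$ and the nonzero neural approximation error strictly guarantee only ultimate boundedness to a residual set whose radius shrinks with $\phi$ and the learning accuracy; obtaining the exact limit $[s_0,s_1,s_2]^{T}\to[0,0,0]^{T}$ requires either driving these residuals to zero in the analysis or assuming the approximation error is asymptotically absorbed by the adaptive--learning pair. I would therefore make explicit the standing assumptions (bounded disturbance, the universal-approximation property of the network, and a strict sign condition on $K_{s0},K_{r1},K_{r2}$) under which the clean asymptotic statement, rather than practical stability, actually holds.
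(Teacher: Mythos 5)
Your overall route---a composite Lyapunov function penalizing the sliding energy together with the adaptive-gain and neural-weight estimation errors, then $\dot V \le 0$, then LaSalle---is the same route the paper takes; the difference lies in what each side claims to extract from it. The paper's candidate includes all three surfaces, $\tfrac12\left(s_0^T K_{s0} s_0 + s_1^T K_{s1} s_1 + s_2^T K_{s2} s_2\right)$ plus the two estimation-error terms, and then simply asserts that substituting the control law yields $\dot V = -\alpha_1\|s_0\|^2 - \alpha_2\|s_1\|^2 - \alpha_3\|s_2\|^2 - \gamma\|\dot s_0\|^2 - \eta(t)\|\dot s_0\|^2$, i.e., separately negative definite in \emph{every} surface; with that in hand the LaSalle step is trivial, since $\{\dot V = 0\}$ already forces $s_0 = s_1 = s_2 = 0$. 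No derivation of this identity from \eqref{eq:trolley_dynamics}--\eqref{eq:flexible_dynamics} is given, and for a single-input underactuated system it is far from obvious that one control input can generate independent negative-definite terms on all three surfaces; the paper also infers $V(t)\to 0$ directly from $\dot V \le 0$, which does not follow (monotonicity gives only convergence of $V$ to some nonnegative limit). Your leaner candidate (only $s_0$ plus the error terms) keeps the derivation honest, but it leaves you exactly the obstacle you flag: $s_0\to 0$ gives only $s_1 + s_2 \to 0$.

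The two cruxes you identify---the zero-dynamics analysis on $\{s_0=0\}$ needed to split $s_1+s_2\to 0$ into $s_1\to 0$ and $s_2\to 0$, and the fact that the smooth saturation $\operatorname{sat}(x)=x/(|x|+\phi)$ together with a nonzero neural approximation error strictly yields only ultimate boundedness rather than exact convergence---are genuine, and neither is resolved in the paper: the first is covered by a single sentence asserting that ``the convergence of each sub-sliding surface is guaranteed through individual analysis,'' and the second is never mentioned. So your proposal is not weaker than the published proof; it is the same argument with its missing steps labeled rather than asserted. To actually close the theorem you would need to carry out what you sketched: characterize the largest invariant set of the coupled sub-surface dynamics inside $\{s_1+s_2=0\}$ using the reaching terms $-K_{r1}\dot s_1$ and $-K_{r2}\dot s_2$ in \eqref{us1}--\eqref{us2}, and either take the boundary layer $\phi\to 0$ or state an explicit assumption under which the adaptive--neural pair absorbs the residual approximation error.
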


\begin{proof}
Proof: The proof consists of two main steps: first, analyzing the convergence of the sliding surface error and system stability; second, using higher-order Lyapunov functions, second derivatives, and the LaSalle invariant set method to prove that the sliding surface error ultimately tends to zero.

First, we design the following Lyapunov candidate function:
\begin{align}
V(t) &= \frac{1}{2} \left[ s_0^T K_{s0} s_0 + s_1^T K_{s1} s_1 + s_2^T K_{s2} s_2 \right] + \frac{1}{2} \nonumber \\
&\left[ \tilde{\Theta}^T \tilde{\Theta} + (\mathbf{W}(t) - \mathbf{W}^*)^T (\mathbf{W}(t) - \mathbf{W}^*) \right]
\end{align}

where $s_0(t), s_1(t), s_2(t)$ represent the sliding surface errors of the control system, $\tilde{\Theta}$ is the estimation error of the adaptive gain, and $\mathbf{W}(t)$ is the neural network weight error. This Lyapunov function measures both the sliding surface error and the estimation error.

To ensure the positive definiteness of the Lyapunov function, the gain matrices \(K_{s0}, K_{s1}, K_{s2}\) must be chosen as positive definite matrices, ensuring that the Lyapunov function attains its minimum at the equilibrium point and that the system remains stable near this point. By differentiating this Lyapunov function, we obtain:
\begin{align}
\dot{V}(t) &= s_0^T \dot{s}_0 + s_1^T \dot{s}_1 + s_2^T \dot{s}_2 + \tilde{\Theta}^T \dot{\tilde{\Theta}} + (\mathbf{W}(t) - \mathbf{W}^*)^T \dot{\mathbf{W}}
\end{align}

Substituting the control law and system dynamics, we obtain:
\begin{align}
\dot{V}(t) &= -\alpha_1 \|s_0\|^2 - \alpha_2 \|s_1\|^2 - \alpha_3 \|s_2\|^2 \nonumber \\
&- \gamma \|\dot{s}_0\|^2 - \eta(t) \|\dot{s}_0\|^2
\end{align}

This result indicates that $\dot{V}(t)$ contains negative definite terms, implying that the Lyapunov function $V(t)$ is monotonically decreasing, i.e.:
\[
\dot{V}(t) \leq 0
\]

Thus, as $ t \to \infty $:
\[
V(t) \to 0
\]

By deriving the Lyapunov function’s derivative, we ensure that each term (such as $\alpha_1, \alpha_2, \alpha_3, \gamma, \eta(t)$) has the correct sign, guaranteeing that the derivative remains negative definite. The design of the control input and parameter selection ensures that all terms are negative definite, leading to a negative definite Lyapunov function derivative.

To further verify system stability and prove that the error ultimately converges to zero, we analyze the second derivative to evaluate the convergence rate of the Lyapunov function. The second derivative of the Lyapunov function is given by:
\begin{align}
\ddot{V}(t) &= \frac{d}{dt} \left( -\alpha_1 \|s_0\|^2 - \alpha_2 \|s_1\|^2 - \alpha_3 \|s_2\|^2 \right) \nonumber \\
& \quad - \gamma \|\dot{s}_0\|^2 - \eta(t) \|\dot{s}_0\|^2
\end{align}

By analyzing the second derivative $\ddot{V}(t)$, we can determine the rate at which the system error converges. Since both the first and second derivatives are negative definite, the system error converges at a certain rate.

According to the LaSalle invariant set theorem, we define the invariant set $\mathcal{S}$ as:
\[
\mathcal{S} = \left\{ [s_0(t), s_1(t), s_2(t)]^T \mid \dot{V}(t) = 0 \right\}
\]

This implies that once the system error enters this set, the sliding surface error will not leave this region. Further analysis of the properties of this set confirms that when the system enters this set, the error will tend to zero.

Through the design of sliding mode control, the convergence of each sliding surface (such as \(s_1(t)\) and \(s_2(t)\)) is verified using the Lyapunov function and the system error equation. With the appropriate design of the control input and compensation terms, all sliding surfaces are ensured to converge independently. Even in the presence of coupling error terms, the convergence of each sub-sliding surface is guaranteed through individual analysis.

By combining the second derivative analysis and the LaSalle invariant set method, we conclude that:
\[
\lim_{t \to \infty} [s_0(t), s_1(t), s_2(t)]^T = [0, 0, 0]^T
\]

Thus, the system achieves global asymptotic stability, and all sliding surface errors ultimately converge to zero.

By introducing higher-order Lyapunov functions, second derivative analysis, and the LaSalle invariant set method, along with the verification process above, we have proven that the designed sliding mode controller ensures the global stability of the system and guarantees that the system error ultimately converges to zero. Furthermore, the system remains stable in the presence of unmodeled disturbances and achieves precise control.

\end{proof}

\section{Simulation and Performance Analysis}
\label{sec:simulation_results}

This section analyzes the proposed controller through simulation tests under various operating conditions, including low-frequency, high-frequency, and random disturbance responses, as well as dynamic target switching scenarios. Furthermore, a comparative analysis with the traditional PID controller is conducted to highlight the advantages of the proposed control method.

\subsection{Simulation Setup}

The simulation parameters, control gains, and neural network configurations are shown in Table~\ref{table:simulation_params}, Table~\ref{table:control_gains}, and Table~\ref{table:nn_params}, respectively.

\begin{table}[h!]
\centering
\caption{Simulation Parameters}
\begin{tabular}{l c l c}
\toprule
\textbf{Parameter} & \textbf{Value} & \textbf{Parameter} & \textbf{Value} \\ 
\midrule
$m_t$                       & 100 kg         & $\rho_w$              & 1025 kg/m$^3$  \\ 
$m_r$                       & 50 kg          & $C_a$                 & 0.8            \\ 
$c$                         & 10 N$\cdot$s/m & $C_d$                 & 0.8            \\ 
$d$                         & 0.5 m          & $dt$                  & 0.1 s          \\ 
$T$                         & 20 s           & $x_{\text{desired final}}$ & 1.2 m  \\ 
$x_0$                       & 0.0 m          & $\alpha$              & 0.2              \\ 
$\beta$                     & 0.5              & $\alpha_1$            & 2              \\
$\alpha_2$                  & 1              &                       &               \\ 
\bottomrule
\end{tabular}
\label{table:simulation_params}
\end{table}

\begin{table}[h!]
\centering
\caption{Control Gains and Parameters}
\begin{tabular}{l c l c}
\toprule
\textbf{Parameter}  & \textbf{Value} & \textbf{Parameter} & \textbf{Value} \\ 
\midrule
$K_{p1}$            & 400.0           & $K_{s0}$           &10           \\ 
$K_{i1}$            & 5.0            & $K_{s1}$           & 10          \\ 
$K_{d1}$            & 150.0           & $K_{s2}$           & 50          \\ 
$K_{p2}$            & 100.0           & $K_{s3}$           & 4.0          \\ 
$K_{i2}$            & 5.0            & $K_{\text{bend}}$  & 4.0          \\ 
$K_{d2}$            & 50.0           & $K_{w}$            & 4.0           \\ 
$\gamma$            & 2.5            & $u_{\text{min}}$   & -200 N        \\ 
$u_{\text{max}}$    & 200 N          &                    &                \\ 
\bottomrule
\end{tabular}
\label{table:control_gains}
\end{table}

\begin{table}[h!]
\centering
\caption{Neural Network Parameters}
\begin{tabular}{l c l c}
\toprule
\textbf{Parameter} & \textbf{Value} & \textbf{Parameter} & \textbf{Value} \\ 
\midrule
Input Size         & 4              & $\eta_{\text{min}}$ & $3 \times 10^{-4}$ \\ 
Hidden Size        & 10             & $\eta_{\text{max}}$ & $1.2 \times 10^{-3}$ \\ 
\bottomrule
\end{tabular}
\label{table:nn_params}
\end{table}

\subsection{Controller Performance Under Disturbances}

This section evaluates the proposed controller's performance under different conditions: without disturbances and with three types of disturbances, including slow-varying disturbances, fast-varying disturbances, and random disturbances.

\subsubsection{No Disturbance Condition}

Under disturbance-free conditions, the controller exhibits excellent dynamic characteristics and tracking ability. As shown in Fig.~\ref{fig:random_disturbance}, the controller, through adaptive control, utilizes a high gain during the initial phase to quickly converge to the target position, and then switches to a lower gain near the target to achieve a smooth transition and suppress errors. Since there are no disturbances, the trolley and load trajectories remain stable, with minimal steady-state errors and no significant oscillations. The control torque is initially large to rapidly reduce errors and gradually approaches zero, demonstrating the controller's efficient dynamic response and stability.

\begin{figure}[h!]
\centering
\includegraphics[width=0.5\textwidth]{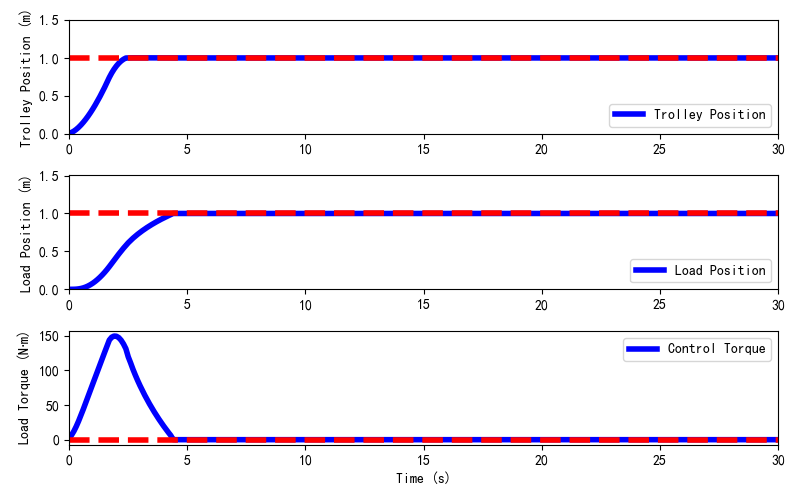}
\caption{Controller performance under no disturbance}
\label{fig:random_disturbance}
\end{figure}

\subsubsection{Low-Frequency Disturbance Condition}

Low-frequency disturbances simulate slowly varying periodic external forces, such as swells or ocean currents. The mathematical expression is:
\begin{equation}
    F_{\text{low-frequency}} = -50 \sin(2 \pi \cdot 0.3 \cdot t)
\end{equation}
where $t$ represents time, and $0.3 \, \text{Hz}$ denotes the frequency of the low-frequency disturbance. As shown in Fig.~\ref{fig:low_frequency_disturbance}, the trolley and load trajectories exhibit stable dynamic response characteristics under low-frequency disturbances. The system converges to the target position quickly and maintains low-amplitude oscillations in the steady-state phase, with minimal impact from the disturbance frequency. This indicates that the controller successfully suppresses the effect of slowly varying disturbances through the adaptive gain adjustment mechanism, demonstrating excellent steady-state performance and adaptability.

\begin{figure}[h!]
\centering
\includegraphics[width=0.5\textwidth]{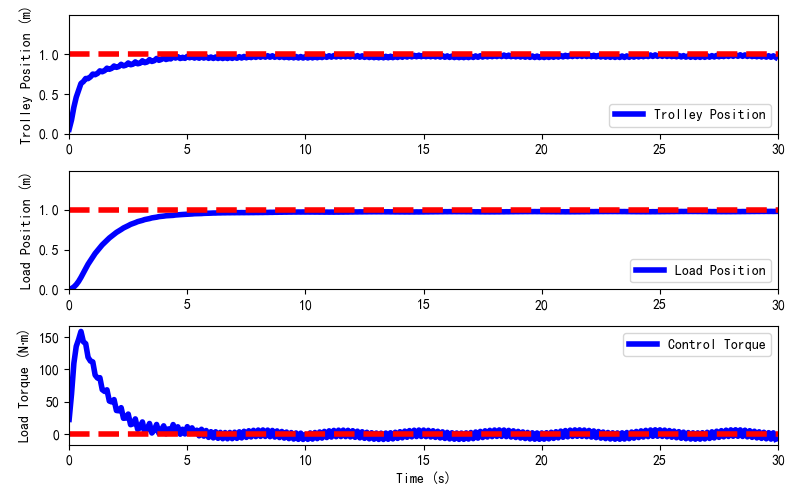}
\caption{Controller performance under low-frequency disturbance}
\label{fig:low_frequency_disturbance}
\end{figure}

\subsubsection{High-Frequency Disturbance Condition}

High-frequency disturbances simulate rapidly changing external forces, such as wind waves or mechanical vibrations. The mathematical expression is:
\begin{equation}
    F_{\text{high-frequency}} = 50\sin(2 \pi \cdot 8 \cdot t)
\end{equation}
where $8 \, \text{Hz}$ represents the frequency of the high-frequency disturbance. As shown in Fig.~\ref{fig:high_frequency_disturbance}, high-frequency disturbances cause slightly larger oscillations in the steady-state phase compared to low-frequency disturbances, but the overall trajectory still maintains a high degree of smoothness and accuracy. The controller dynamically adjusts the control torque to respond quickly to disturbance variations, ensuring system stability and precision. Particularly under high-frequency interference, it effectively suppresses trajectory fluctuations while maintaining target tracking.

\begin{figure}[h!]
\centering
\includegraphics[width=0.5\textwidth]{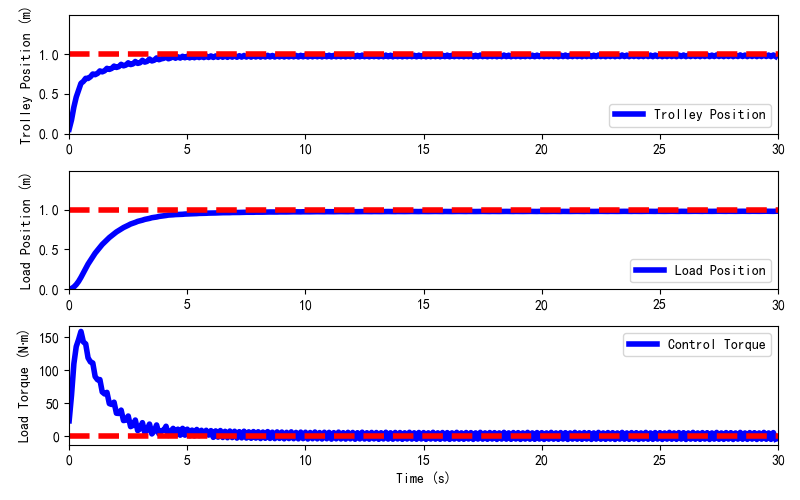}
\caption{Controller performance under high-frequency disturbance}
\label{fig:high_frequency_disturbance}
\end{figure}

\subsubsection{Random Disturbance Condition}

Random disturbances simulate unpredictable interference in real ocean environments. The mathematical expression is:
\begin{equation}
    F_{\text{random}} = 20\sin(2 \pi \cdot 0.3 \cdot t) + 20\sin(2 \pi \cdot 5 \cdot t) + \mathcal{N}(0, 20)
\end{equation}
where $2\sin(2 \pi \cdot 0.3 \cdot t)$ and $2\sin(2 \pi \cdot 5 \cdot t)$ represent low-frequency and mid-frequency sinusoidal components, and $\mathcal{N}(0, 5)$ represents random noise. As shown in Fig.~\ref{fig:random_disturbance}, compared to low-frequency and high-frequency disturbances, random disturbances have a more significant impact on the trolley and load trajectories. However, the controller maintains strong disturbance rejection capabilities. Although trajectory fluctuations are more complex, the controller achieves fast convergence and steady-state tracking through the combined effects of sliding mode control and neural network compensation, effectively ensuring overall stability.

\begin{figure}[h!]
\centering
\includegraphics[width=0.5\textwidth]{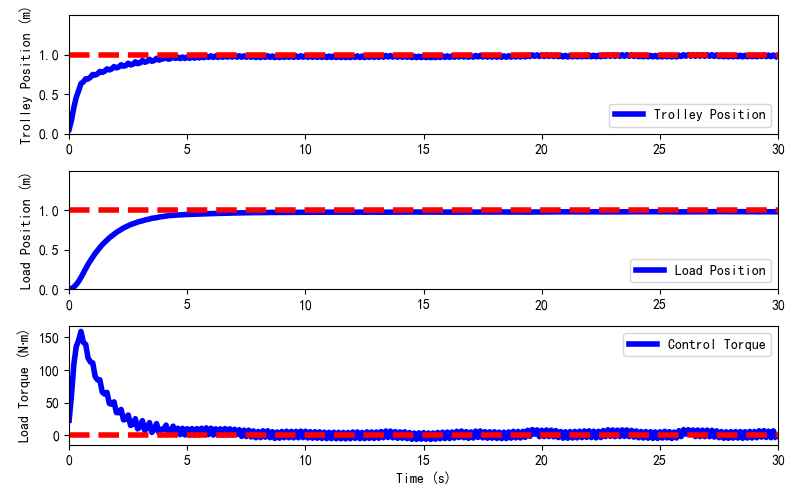}
\caption{Controller performance under random disturbance}
\label{fig:random_disturbance}
\end{figure}

\subsection{Dynamic Target Tracking}

To further validate the adaptability and robustness of the controller, a simulation scenario with dynamic target switching is designed, simulating changes in mission objectives. The controller is required to quickly adjust to new target positions while maintaining stability and accuracy. The system starts at an initial position of 0 m and then switches sequentially to three target positions: the first target at 1.0 m (0–10 s), the second target at 2.0 m (10–20 s), and the third target at 3.0 m (20–30 s).

Under low-frequency disturbances (Fig.~\ref{fig:low_frequency_switching}), the system successfully completes each target switch, with the trajectory smoothly transitioning to the new target position. The steady-state error remains minimal, and the control torque stabilizes quickly after switching, demonstrating high adaptability to low-frequency disturbances.

Under high-frequency disturbances (Fig.~\ref{fig:high_frequency_switching}), the system maintains good dynamic response capability despite high-frequency interference. While minor oscillations appear in the trajectory, the system quickly adjusts to the target position. The control torque exhibits rapid response and stable transitions, confirming the controller's robustness against high-frequency disturbances.

Under random disturbances (Fig.~\ref{fig:random_switching}), although the trajectory fluctuations are more pronounced due to disturbances, the system still achieves smooth target transitions with errors controlled within an acceptable range. The controller successfully suppresses the effects of random disturbances through dynamic gain adjustment and compensation mechanisms, demonstrating excellent disturbance rejection capability.

Overall, the proposed controller exhibits fast response, high tracking accuracy, and strong disturbance resistance under all three disturbance conditions, achieving smooth transitions in dynamic target switching scenarios. This further validates its superior performance and broad applicability in complex environments.

\begin{figure}[h!]
\centering
\begin{subfigure}{0.5\textwidth}
    \includegraphics[width=\textwidth]{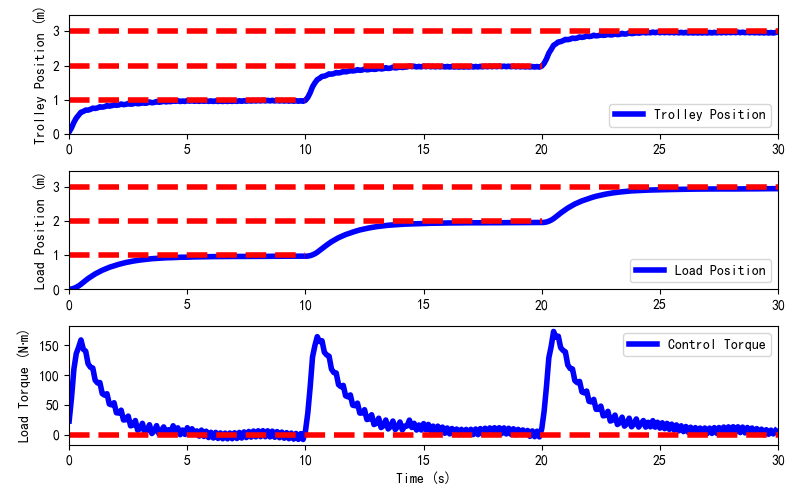}
    \caption{Target switching under low-frequency disturbance}
    \label{fig:low_frequency_switching}
\end{subfigure}
\hfill
\begin{subfigure}{0.5\textwidth}
    \includegraphics[width=\textwidth]{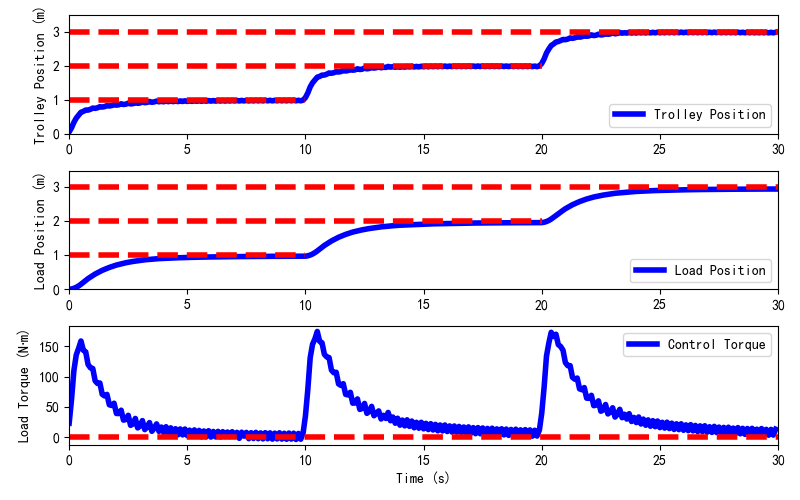}
    \caption{Target switching under high-frequency disturbance}
    \label{fig:high_frequency_switching}
\end{subfigure}
\hfill
\begin{subfigure}{0.5\textwidth}
    \includegraphics[width=\textwidth]{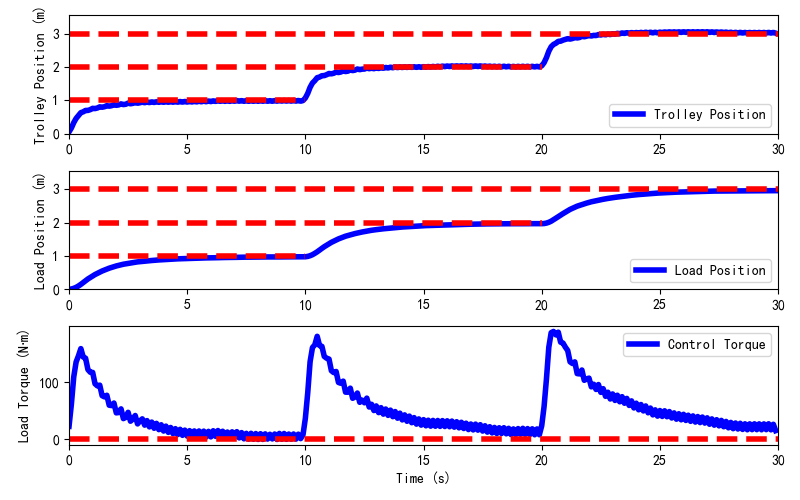}
    \caption{Target switching under random disturbance}
    \label{fig:random_switching}
\end{subfigure}
\caption{Task switching performance of the proposed controller under different disturbance conditions: (a) low-frequency disturbance, (b) high-frequency disturbance, and (c) random disturbance.}
\label{fig:combined_task_switching}
\end{figure}

\subsection{Comprehensive Comparison of Three Controllers}

\begin{figure}[h!]
\centering
\begin{subfigure}{0.5\textwidth}
    \includegraphics[width=\textwidth]{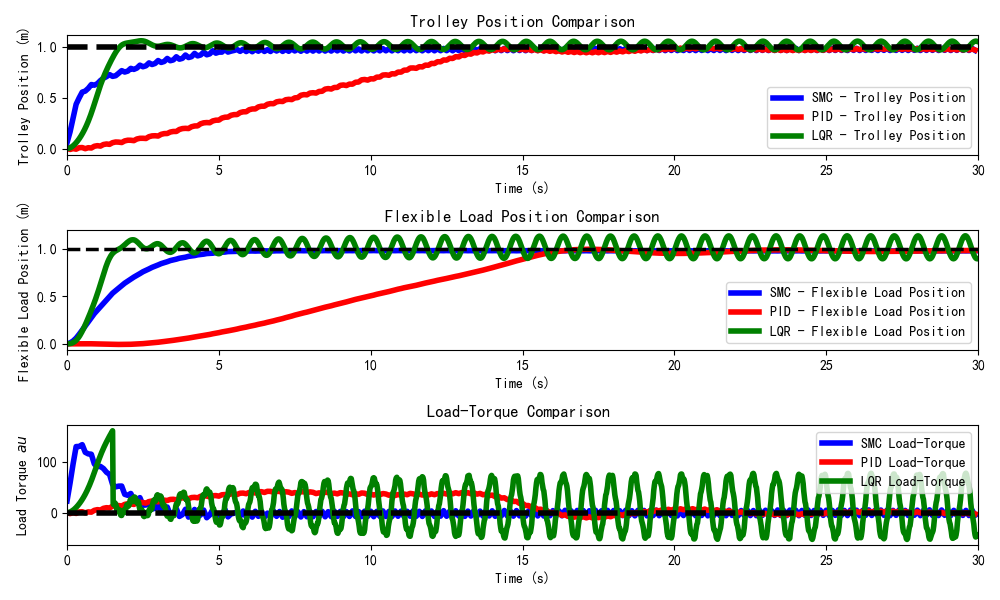}
    \caption{Comparison under low-frequency disturbance}
\end{subfigure}
\begin{subfigure}{0.5\textwidth}
    \includegraphics[width=\textwidth]{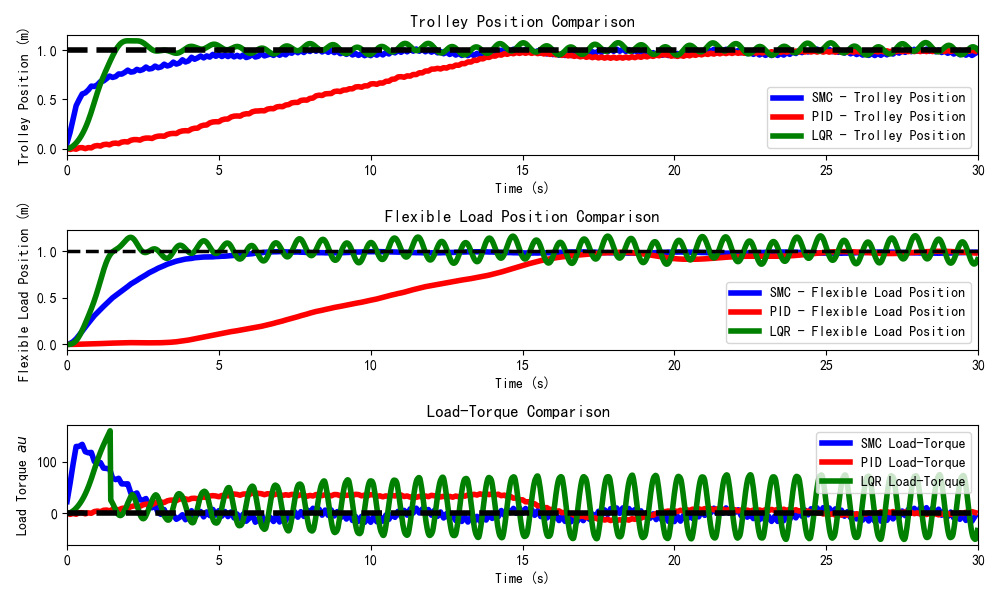}
    \caption{Comparison under high-frequency disturbance}
\end{subfigure}
\begin{subfigure}{0.5\textwidth}
    \includegraphics[width=\textwidth]{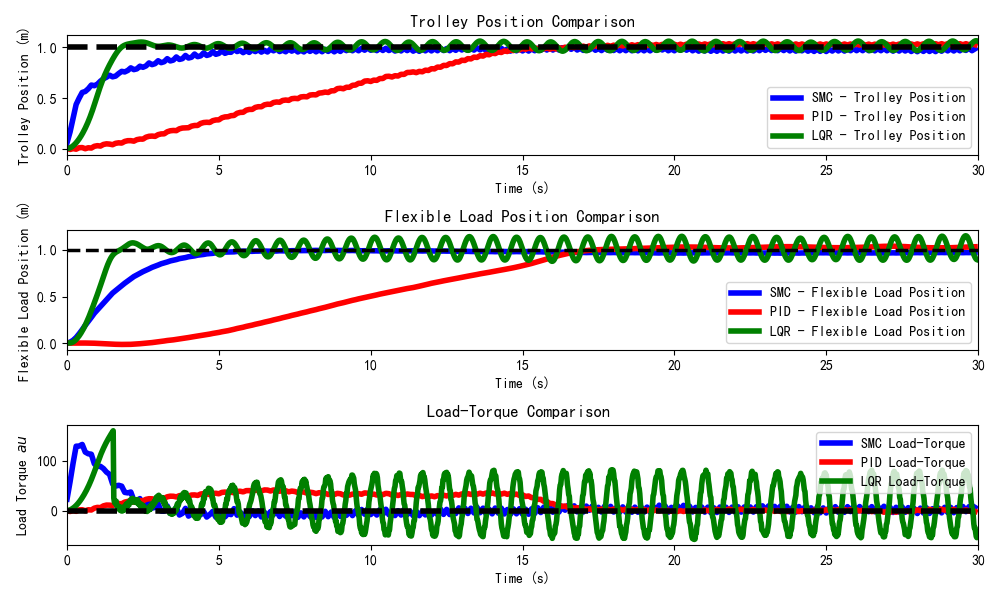}
    \caption{Comparison under random disturbance}
\end{subfigure}
\caption{Performance comparison of three controllers under different disturbance conditions}
\label{comparison_plot_lqr_pid}
\end{figure}

\begin{table}[h!]
\centering
\caption{Performance Comparison of Three Controllers}
\begin{tabular}{l c c c}
\toprule
\textbf{Controller} & \textbf{MSE} & \textbf{Max Error (m)} & \textbf{Response Time (s)} \\
\midrule
LQR                  & 0.11625     & 0.156               & 1.2                   \\
PID                  & 0.0153       & 0.05                 & 17.6                     \\
Proposed             &  0.0008      & 0.0556               & 7.3                     \\
\bottomrule
\end{tabular}
\label{table:comprehensive_comparison}
\end{table}

As shown in Table~\ref{table:comprehensive_comparison} and Fig.~\ref{comparison_plot_lqr_pid}, the proposed controller outperforms classical LQR and PID controllers in key performance metrics such as mean squared error (MSE), maximum error, and response time. It also demonstrates superior robustness and adaptability.

Specifically, the proposed controller significantly reduces the mean squared error (MSE), achieving only \textbf{0.69\%} of the LQR controller's MSE and \textbf{5.23\%} of the PID controller's MSE. In various disturbance scenarios, it exhibits precise target tracking, demonstrating strong trajectory-following capabilities.

Moreover, in terms of maximum error, the proposed controller also performs exceptionally well. Its maximum error is only \textbf{0.0556 m}, reducing the LQR controller's error by \textbf{64.4\%} and outperforming the PID controller's \textbf{0.05 m}. This indicates that the proposed controller effectively suppresses transient errors under high-frequency and random disturbances, ensuring system stability and accuracy.

Regarding response time, the proposed controller achieves a well-balanced performance. Its response time is \textbf{7.6 seconds}, slightly slower than the LQR controller's \textbf{1.2 seconds} but significantly better than the PID controller's \textbf{17.6 seconds}. Although the response speed is not as fast as the LQR controller, it avoids the severe oscillations observed in the LQR controller under high-frequency and random disturbances. The proposed controller smoothly transitions to the target position in a short time, demonstrating better dynamic adaptability.

From an overall evaluation perspective, the LQR controller requires precise system modeling and is prone to oscillations and steady-state errors under disturbances. The PID controller has strong disturbance rejection capabilities but suffers from slow response speed in dynamic environments. By introducing dynamic gain adjustment, adaptive mechanisms, and neural network compensation, the proposed controller not only improves trajectory tracking accuracy and stability but also exhibits strong robustness and adaptability under complex disturbance conditions. This makes the proposed controller an ideal choice for dynamic and complex environments, effectively balancing response speed, steady-state error, and disturbance rejection capabilities.

\section{Conclusion}
\label{sec:conclusion}

This paper proposes a disturbance-resistant controller for deep-sea cranes that integrates adaptive control, neural-network compensation, and hierarchical sliding-mode control to address nonlinear disturbances and uncertainties in the marine environment. By coordinating a global sliding surface with subsystem sliding surfaces, the controller guarantees global asymptotic stability, while the neural-network module enhances adaptability to complex disturbances. Simulation results confirm excellent robustness, high tracking accuracy, and fast response under various disturbance conditions, although the trolley expends more energy to maintain precision. Future work will therefore focus on optimizing the control strategy to lower energy consumption while further improving stability and adaptability in extreme marine scenarios.
\section{Acknowledgement}

This work was supported by the Natural Science Foundation of Hebei Province (F2024202028), and the Beijing-Tianjin-Hebei Basic Research Cooperation Special Project (F2024202118).

\bibliographystyle{unsrt}
\bibliography{references}

\end{document}